\documentclass[11pt,reqno]{amsart}
\usepackage{fullpage}
\usepackage[foot]{amsaddr}
\usepackage[OT1]{fontenc}
\usepackage{amsmath}
\usepackage{amssymb}
\usepackage{amsthm}
\usepackage{thmtools} 
\usepackage{algorithm}
\usepackage{algpseudocode}
\usepackage{enumitem}
\usepackage{xifthen}
\usepackage{xspace}

\usepackage[margin=1cm]{caption} 
\usepackage{subfig}

\usepackage[thinlines]{easytable}

\usepackage[bookmarks=true,hypertexnames=false,pagebackref]{hyperref}
\hypersetup{colorlinks=true, citecolor=blue, linkcolor=red, urlcolor=blue}

\usepackage{pgfplots}
\pgfplotsset{compat=1.16}
\usepackage{tikz}
\usetikzlibrary{arrows,arrows.meta,backgrounds,calc,fit,decorations.pathreplacing,decorations.markings,shapes.geometric}

\tikzstyle{internal} = [draw, fill, shape=circle]
\tikzstyle{external} = [shape=circle]
\tikzstyle{square}   = [draw, fill, rectangle]
\tikzstyle{triangle} = [draw, fill, regular polygon, regular polygon sides=3, inner sep=3pt]
\tikzstyle{pentagon} = [draw, fill, regular polygon, regular polygon sides=5, inner sep=2pt, minimum size=14pt]
\tikzset{every fit/.append style=text badly centered}

\usetikzlibrary{positioning,chains,fit,shapes,calc}
\usetikzlibrary{trees}
\usetikzlibrary{decorations.pathreplacing}
\usetikzlibrary{decorations.pathmorphing}
\usetikzlibrary{decorations.markings}
\tikzset{>=latex} 

\usepackage{ifthen}

\usepackage{cleveref}

\usepackage[textsize=tiny]{todonotes}

\usepackage[normalem]{ulem}

\usepackage{mleftright}

\usepackage{cool}
\Style{DSymb={\mathrm d},DShorten=true,IntegrateDifferentialDSymb=\mathrm{d}}

\newcommand{\Ex}{\mathop{\mathbb{{}E}}\nolimits}
\renewcommand{\Pr}{\mathop{\mathrm{Pr}}\nolimits}

\def\*#1{\mathbf{#1}}
\def\+#1{\mathcal{#1}}
\def\-#1{\mathrm{#1}}
\def\=#1{\mathbb{#1}}
\def\^#1{\mathbb{#1}}

\newcommand{\abs}[1]{\ensuremath{\left\vert#1\right\vert}}

\newcommand{\eps}{\varepsilon}

\newcommand{\Var}[2]{\ensuremath{\textnormal{Var}_{#1}\left(#2\right)}}

\newcommand{\defeq}{:=}

\newcommand{\numP}{\#{\textnormal{\textbf{P}}}}

\newtheorem{theorem}{Theorem}

\newtheorem{lemma}[theorem]{Lemma}

\theoremstyle{definition}

\theoremstyle{remark}

\crefname{theorem}{Theorem}{Theorems}
\crefname{observation}{Observation}{Observations}
\crefname{claim}{Claim}{Claims}
\crefname{condition}{Condition}{Conditions}
\crefname{algorithm}{Algorithm}{Algorithms}
\crefname{property}{Property}{Properties}
\crefname{example}{Example}{Examples}
\crefname{fact}{Fact}{Facts}
\crefname{lemma}{Lemma}{Lemmas}
\crefname{corollary}{Corollary}{Corollaries}
\crefname{definition}{Definition}{Definitions}
\crefname{remark}{Remark}{Remarks}
\crefname{proposition}{Proposition}{Propositions}
\crefname{equation}{equation}{equations}
\crefname{enumi}{Case}{Case}
\creflabelformat{enumi}{(#2#1#3)}

\makeatletter
\def\prob#1#2#3{\goodbreak\begin{list}{}{\labelwidth\z@ \itemindent-\leftmargin
      \itemsep\z@  \topsep6\p@\@plus6\p@
      \let\makelabel\descriptionlabel}
  \item[\textbf{Name}]#1
  \item[\textbf{Instance}]#2
  \item[\textbf{Output}]#3
  \end{list}}
\makeatother

\makeatletter
\providecommand\@dotsep{5}
\def\listtodoname{Todo list}
\def\listoftodos{\@starttoc{tdo}\listtodoname}
\makeatother

\newcommand{\dTV}{d_{\mathrm{TV}}}

\definecolor{ABcolor}{RGB}{0,0,255}

\definecolor{HGcolor}{RGB}{255,50,50}

\definecolor{KAcolor}{RGB}{34,139,34}

\title{Linear time approximation of the TV distance between product distributions}

\begin{document}

\author{Konrad Anand}
\author{Alistair Benford}
\author{Heng Guo}
\address{School of Informatics, University of Edinburgh, Informatics Forum, Edinburgh, EH8 9AB, United Kingdom}

\begin{abstract}
  We present a linear time approximation algorithm of the total variation distance between two product distributions.
  The main algorithm was found using ChatGPT 5.6 Sol Ultra.
\end{abstract}

\maketitle

\section{Introduction}

The total variation (TV) distance is an important statistic with wide-ranging applications.
However, the computation of TV distances is not easy, even for simple distributions such as product distributions.
Unlike the Kullback-Leibler divergence (also known as relative entropy), the TV distance does not tensorise over product distributions.
In fact, the exact evaluation for product distributions turns out to be \numP-hard \cite{BGMMPV25b}.

It is then natural to ask for approximation.
Chernoff bounds imply relatively straightforward efficient additive error approximations.
(See, e.g.\ \cite{PTXZ26} and the references therein for additively approximate TV distances in more advanced models.)
Instead, we focus on relative errors.
In this context, the first efficient (randomised) approximation algorithm was given by Feng, Guo, Jerrum, and Wang \cite{FGJW23},
which has inspired a number of follow-up works in various settings \cite{BGMMPV24, BGMMPV25a, BFS25, FLY25, FF26, FFYZ26}.
Most notably, Feng, Liu, and Liu \cite{FLL24} derandomised the algorithm in \cite{FGJW23}.

In this note, we give a linear time approximation algorithm of the total variation distance between two product distributions,
improving upon \cite{FGJW23}.

\begin{theorem}\label{thm:main}
  Let $q,n>0$ be two integers, and $ P=\bigotimes_{i=1}^{n}P_i$ and $Q=\bigotimes_{i=1}^{n}Q_i$ be two product distributions on $[q]^n$, given explicitly by the marginal probabilities of each coordinate. 
  For any $0<\varepsilon\leq 1$ and $0<\delta<1$, there is a randomised algorithm that outputs a nonnegative random variable $Z$ satisfying
  \[
    \Pr\left[
      (1-\varepsilon)d_{\mathrm{TV}}(P,Q)
      \leq Z\leq
      (1+\varepsilon)d_{\mathrm{TV}}(P,Q)
    \right]\geq 1-\delta
  \]
  in $O\left(qn\,\varepsilon^{-2}\log\frac{1}{\delta}\right)$ time.
\end{theorem}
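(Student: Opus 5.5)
The plan is to build an unbiased estimator of $d_{\mathrm{TV}}(P,Q)$ whose relative variance is $O(1)$ and that can be sampled and evaluated in $O(qn)$ time. The standard median-of-means trick then finishes the proof. Concretely, average $O(\varepsilon^{-2})$ independent copies to get relative error $\varepsilon$ with probability $3/4$ (Chebyshev), and take the median of $O(\log\frac1\delta)$ such averages. This gives the claimed $O(qn\,\varepsilon^{-2}\log\frac1\delta)$ total time.

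\textbf{Construction.} Let $(X,Y)$ be the coordinatewise maximal coupling of $P$ and $Q$: independently for each $i$, $(X_i,Y_i)$ is a maximal coupling of $P_i,Q_i$, so $\Pr[X_i\ne Y_i]=d_i:=d_{\mathrm{TV}}(P_i,Q_i)$. Let $E=\{X\ne Y\}$ and $p=\Pr[E]=1-\prod_i(1-d_i)$. Write $m(x)=\prod_i\min(P_i(x_i),Q_i(x_i))=\Pr[X=Y=x]$. Then
\[
P(x)-Q(x)=\Pr[X=x,E]-\Pr[Y=x,E].
\]
Let $\mu,\nu$ be the laws of $X$ and $Y$ conditioned on $E$, so that $\mu(x)=(P(x)-m(x))/p$ and $\nu(x)=(Q(x)-m(x))/p$. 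It follows that $d_{\mathrm{TV}}(P,Q)=p\,d_{\mathrm{TV}}(\mu,\nu)$. Setting $\pi=(\mu+\nu)/2$, the estimator is
\[
W=p\cdot\frac{|\mu(z)-\nu(z)|}{\mu(z)+\nu(z)},\qquad z\sim\pi.
\]
We have $\Ex W=d_{\mathrm{TV}}(P,Q)$ and $0\le W\le p$, hence $\Ex W^2/(\Ex W)^2\le 1/d_{\mathrm{TV}}(\mu,\nu)$.

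\textbf{Efficiency.} $p$ and all $d_i$ are computed in $O(qn)$ time. To sample $X$ conditioned on $E$:
\begin{itemize}
\item choose the first index $i$ with $X_i\ne Y_i$ with probability $\prod_{j<i}(1-d_j)\,d_i/p$;
\item for $j<i$, sample $X_j$ from the normalised overlap $\min(P_j,Q_j)$;
\item at $i$, sample $X_i$ from the normalised excess $(P_i-Q_i)^+$;
\item for $j>i$, sample $X_j$ from $P_j$.
\end{itemize}
$Y$ conditioned on $E$ is sampled symmetrically, and a fair coin chooses between $\mu$ and $\nu$. The ratio $W$ requires only $P(z)$, $Q(z)$ and $m(z)$, each a product computable in $O(n)$ time. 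Products of many small numbers should be handled in log-space or by a common rescaling, but the arithmetic model is the standard one.

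\textbf{Main obstacle.} The crux is a constant lower bound $d_{\mathrm{TV}}(\mu,\nu)\ge c$. Intuitively, conditioned on $E$ there is a first disagreeing coordinate $i$, where $X_i$ lies in the support of $(P_i-Q_i)^+$ and $Y_i$ in the disjoint support of $(Q_i-P_i)^+$. So $\mu$ and $\nu$ ought to be well separated.

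To prove it, I would first try to show $\sum_x\min(\mu(x),\nu(x))\le 1-c$. The approach is to compare $\min(P(x)-m(x),\,Q(x)-m(x))$ with $p$ times the mass distinguishing the coordinate of first disagreement. One would bound $\min(P-m,Q-m)$ pointwise via the decomposition $P(x)-m(x)=\sum_i(\text{first coordinate where }P\text{'s factor exceeds }\min)$, and then sum over $x$.

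If a uniform constant fails, the fallback is to replace $\pi$ by a better proposal, for example a mixture weighted towards $x$ where $|P-Q|$ is large. Another fallback is to estimate $d_{\mathrm{TV}}(\mu,\nu)$ recursively on the remaining coordinates. Either must keep the relative variance $O(1)$ while preserving the $O(qn)$ per-sample cost. I expect this lower bound to be the only genuinely nontrivial step.
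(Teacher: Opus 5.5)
\textbf{There is a genuine gap.} The step you call the crux, a uniform bound $d_{\mathrm{TV}}(\mu,\nu)\ge c$, is false, and the estimator $W$ itself does not have bounded relative variance.

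\emph{The crux fails.} Take $P_i=\mathrm{Bern}(\frac12+\eta)$ and $Q_i=\mathrm{Bern}(\frac12-\eta)$ for all $i\in[n]$, with $\eta\ll 1/n$. Then $p=1-(1-2\eta)^n\approx 2n\eta$. Pinsker's inequality (the KL divergence is about $8n\eta^2$) gives $d_{\mathrm{TV}}(P,Q)\lesssim 2\sqrt n\,\eta$. Hence $d_{\mathrm{TV}}(\mu,\nu)=d_{\mathrm{TV}}(P,Q)/p=O(1/\sqrt n)$. Your intuition fails because $\mu$ is the law of $X$ alone, and the first disagreeing coordinate cannot be read off from $X$. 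Here $\mu$ and $\nu$ are essentially a uniform string with one uniformly random coordinate forced to $1$, respectively to $0$. Their Hamming weights differ by a shift of $1$ against fluctuations of order $\sqrt n$.

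\emph{The estimator fails too.} The loss is not only in the crude bound $\Ex W^2\le p\,\Ex W$. Add a coordinate $0$ with $P_0=\mathrm{Bern}(\gamma)$, $Q_0$ the point mass at $0$, and $\gamma=\sqrt n\,\eta$. You have $\Ex W^2=\frac p2\sum_z\frac{(P(z)-Q(z))^2}{P(z)+Q(z)-2m(z)}$, and every $z$ with $z_0=1$ has $Q(z)=m(z)=0$. So $\Ex W^2\ge p\gamma/2\gtrsim n^{3/2}\eta^2$, whereas $(\Ex W)^2=d_{\mathrm{TV}}(P,Q)^2\le(\gamma+2\sqrt n\,\eta)^2\lesssim 9n\eta^2$. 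Thus the relative variance of $W$ is $\Omega(\sqrt n)$: the rare event $z_0=1$, of probability about $1/(4\sqrt n)$ under $\pi$, carries a constant fraction of the mean. Median-of-means then needs $\Omega(\sqrt n\,\varepsilon^{-2})$ samples per block. You are back in the super-linear regime of the coupling-based approach of \cite{FGJW23}, which this paper improves on.

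\textbf{The missing idea} is to replace the one-shot importance ratio by a sequential, Rao--Blackwellised estimator (filtered Monte Carlo).
\begin{itemize}
\item Sample $X\sim\frac{P+Q}{2}$ coordinate by coordinate, and track the posterior bias $U_i=\alpha_i-\beta_i$ with $\alpha_i=\frac{P(X_{\le i})}{P(X_{\le i})+Q(X_{\le i})}$.
\item Output $A=\sum_i C_i$, where $C_i=\Ex[\,|U_i|\mid\mathcal F_{i-1}]-|U_{i-1}|$ is computable in $O(q)$ time from $\alpha_{i-1}$, $P_i$, $Q_i$.
\item Unbiasedness follows by telescoping and $\Ex|U_n|=d_{\mathrm{TV}}(P,Q)$.
\item The variance bound rests on the inequality $\sum_\omega|\alpha R(\omega)-\beta S(\omega)|-|\alpha-\beta|\le(1-|\alpha-\beta|)\,d_{\mathrm{TV}}(R,S)$. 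By the product structure this gives $\Ex[\sum_{j\ge i}C_j\mid\mathcal F_{i-1}]\le d_{\mathrm{TV}}(P_{i:n},Q_{i:n})\le d_{\mathrm{TV}}(P,Q)$.
\item Since $C_i$ is $\mathcal F_{i-1}$-measurable and $A^2\le 2\sum_i C_i\sum_{j\ge i}C_j$, this yields $\Ex A^2\le 2\,d_{\mathrm{TV}}(P,Q)\,\Ex A=2\,d_{\mathrm{TV}}(P,Q)^2$.
\end{itemize}
Your fallbacks (a better proposal, or recursion on the remaining coordinates) gesture in this direction, but they supply neither the estimator nor this bound.
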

For \Cref{thm:main}, real arithmetic operations and generating a uniform real from $(0,1)$ are considered unit-cost, and $P$ and $Q$ are given by the marginals $p_{ij}$ and $q_{ij}$ for $i\in[n]$ and $j\in[q]$.
Thus, for fixed $\eps$ and $\delta$, the running time of \Cref{thm:main} is linear in the input size.
Previously, the algorithm in \cite{FGJW23} states an $O\left( \frac{q n^2}{\eps^2}\log\frac{1}{\delta} \right)$ time,
although it can be improved to $O\left( \frac{q n^{1.5}}{\eps^2}\log\frac{1}{\delta} \right)$ time using a bound given by Kontorovich \cite{Kon25}.
The deterministic algorithm of Feng, Liu, and Liu runs in time $O\left( \frac{q n^{2}}{\eps}\log q\log\frac{n}{\eps \dTV(P,Q)} \right)$.
Since the ideas of \cite{FGJW23} have been found useful in various other settings,
we hope the ideas of \Cref{thm:main} will find wide applications too.

The algorithm in \cite{FGJW23} relies on an estimator based on couplings,
and the algorithm in \cite{FLL24} is based on the idea of careful discretisation.
Instead, the algorithm in \Cref{thm:main} uses very different ideas.
Essentially, the improved running time comes from an adaptation of the filtered Monte Carlo technique \cite{Gla93}.

In \Cref{sec:limit}, we present an alternative implementation of \Cref{thm:main}, which runs in time $O\left( nq\log q+\frac{n\log q}{\eps^2}\log\frac{1}{\delta} \right)$.
This is faster if $\eps^{-2}\log q+\frac{\log q}{q}=o(1)$.
On the other hand, we also present an $\Omega(qn)$ lower bound in \Cref{thm:lb}.

In fact, the authors of this note first set out to improve the run-time bounds in \cite{FGJW23}, and found a roughly $O(qn^{1.3})$ time algorithm for the product of Bernoulli distributions.
Shortly after that, the authors tried ChatGPT 5.6 Sol Ultra, which managed to find the linear-time (and much more elegant) algorithm in \Cref{thm:main} with a single prompt.
The prompt is adapted from OpenAI's cycle double cover prompt by only changing the problem statement.
However, the key reference \cite{Gla93} did not emerge until after a few more rounds of interaction with ChatGPT.
In the note we still use words like ``we'' or ``our'', mostly out of habit, although the main ideas actually come from ChatGPT.
The proofs are written or rewritten by the authors, who assume all responsibility for errors.

\section{The algorithm}

Let $\Omega$ be the state space, which in our setting is $[q]^n$.
We may assume $\dTV(P,Q)\neq 0$, as it is easy to rule out otherwise at the start of the algorithm.
The starting point of the algorithm is to rewrite 
\begin{align}
  \dTV(P,Q) & = \frac{1}{2}\sum_{\omega\in\Omega}\abs{P(\omega)-Q(\omega)}\notag\\
  &=\sum_{\omega\in\Omega}\frac{P(\omega)+Q(\omega)}{2} \frac{|P(\omega)-Q(\omega)|}{P(\omega)+Q(\omega)}. \label{eqn:U}
\end{align}
Therefore, if we sample $\omega$ from the mixture $\frac{P+Q}{2}$, then $\frac{|P(\omega)-Q(\omega)|}{P(\omega)+Q(\omega)}$ is an unbiased estimator.
However, this estimator has a large relative variance, especially when $\dTV(P,Q)$ is small.
The main point of our algorithm is to overcome this issue by designing a different estimator.

We use a filtered Monte Carlo idea \cite{Gla93}.
The algorithm first draws a hidden fair hypothesis $H\in\{+,-\}$. 
Then, draw a sample from $P$ if $H=+$ and from $Q$ otherwise.
Equivalently, this is drawing from $\frac{P+Q}{2}$.
Call the samples $X=(X_1,\dots,X_n)$.
After each observed coordinate, maintain the posterior probabilities
\[
\alpha_i=\Pr(H=+\mid \+F_i)
\qquad\text{and}\qquad
\beta_i=1-\alpha_i,
\]
where the filtration $\+F_i$ is the $\sigma$-algebra generated by $X_1,\dots,X_i$.
It is important that $\+F_i$ excludes the randomness of $H$.
If the sample up to coordinate $i$ is $\omega_{\le i}$,
then 
\begin{align*}
  \alpha_i=\frac{P(\omega_{\le i})}{P(\omega_{\le i})+Q(\omega_{\le i})}.
\end{align*}
The posterior bias $U_i=\alpha_i-\beta_i$ is a bounded martingale. 
Its terminal absolute value, by \eqref{eqn:U}, has expectation exactly equal to the total variation distance:
\[
\Ex |U_n|=d_{\mathrm{TV}}(P,Q).
\]
Instead of using $|U_n|$ itself, whose relative variance can be large when the distance is small,
the algorithm sums the predictable increases of the submartingale $|U_i|$:
\[
  C_i\defeq\Ex \left[|U_i|-|U_{i-1}|\mid\mathcal{F}_{i-1}\right] = \mathbb{E}\!\left[|U_i|\mid\mathcal{F}_{i-1}\right]-|U_{i-1}|.
\]
The final estimator is $A\defeq\sum_{i=1}^{n}C_i$.
The way we compute $A$ is given by \Cref{alg:main-alg},
where we denote $(x)_+\defeq\max\{x,0\}$. 

\begin{algorithm}[htbp]
\caption{The estimator}\label{alg:main-alg}
\begin{algorithmic}[1]
  \State Draw uniformly at random a sign $H\in\{+,-\}$. \Comment{The same sign is used for all coordinates.}
\State Initialize $\alpha\gets \tfrac12$, $\beta\gets \tfrac12$, and $A\gets 0$.
\For{$i=1,\ldots,n$}
    \State Before observing coordinate $i$, compute
    \Statex
    \begin{align}      \label{eqn:C_i}
          C_i=
          \begin{cases}
            \displaystyle 2\sum_{c=1}^{q}\bigl(\beta Q_i(c)-\alpha P_i(c)\bigr)_+,
            & \text{ if }\alpha\geq\beta,\\
            \displaystyle 2\sum_{c=1}^{q}\bigl(\alpha P_i(c)-\beta Q_i(c)\bigr)_+,
            & \text{ if } \alpha<\beta.
          \end{cases}
    \end{align}
    \State $A\gets A+C_i$.
    \If{$H=+$}
        \State Sample $X_i$ from $P_i$.
    \Else
        \State Sample $X_i$ from $Q_i$.
    \EndIf
    \State Update simultaneously \label{line:update}
    \Statex
    \[
    \alpha\gets\frac{\alpha P_i(X_i)}{\alpha P_i(X_i)+\beta Q_i(X_i)}
    \qquad \text{and}
    \qquad
    \beta\gets\frac{\beta Q_i(X_i)}{\alpha P_i(X_i)+\beta Q_i(X_i)}.
    \]
\EndFor
\State \Return $A$.
\end{algorithmic}
\end{algorithm}

The estimator $A$ is nonnegative and unbiased. 
\begin{lemma}  \label{lem:first-moment}
  Let $A$ be as in \Cref{alg:main-alg}.
  Then, $\Ex [A] = \dTV(P,Q)$.
\end{lemma}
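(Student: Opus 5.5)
The plan has three steps: show that the quantity computed in \eqref{eqn:C_i} is exactly the conditional expectation $\Ex[|U_i|\mid\+F_{i-1}]-|U_{i-1}|$, telescope the sum, and evaluate $\Ex|U_n|$ via \eqref{eqn:U}.

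\textbf{Identifying $C_i$.} Condition on $\+F_{i-1}$, and let $\alpha=\alpha_{i-1}$ and $\beta=\beta_{i-1}$ be the current posteriors. Given $\+F_{i-1}$, we have $H=+$ with probability $\alpha$, and then $X_i\sim P_i$; otherwise $X_i\sim Q_i$. This holds because the coordinates are independent given $H$. Hence
\[
\Pr(X_i=c\mid\+F_{i-1})=\alpha P_i(c)+\beta Q_i(c).
\]
The update in line~\ref{line:update} is Bayes' rule, so
\[
U_i=\frac{\alpha P_i(X_i)-\beta Q_i(X_i)}{\alpha P_i(X_i)+\beta Q_i(X_i)}.
\]
Therefore
\[
\Ex\bigl[|U_i|\mid\+F_{i-1}\bigr]=\sum_{c=1}^q\bigl|\alpha P_i(c)-\beta Q_i(c)\bigr|.
\]
Symbols $c$ with $\alpha P_i(c)+\beta Q_i(c)=0$ contribute zero on both sides, so they can be ignored.

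Next use $|x|=2(x)_+-x$ when $\alpha\ge\beta$, writing $x=\beta Q_i(c)-\alpha P_i(c)$. Summing over $c$ gives
\[
\sum_c|\alpha P_i(c)-\beta Q_i(c)|=2\sum_c\bigl(\beta Q_i(c)-\alpha P_i(c)\bigr)_+ +(\alpha-\beta),
\]
using $\sum_c P_i(c)=\sum_c Q_i(c)=1$. Since $\alpha-\beta=|U_{i-1}|$ in this case, the first case of \eqref{eqn:C_i} equals $\Ex[|U_i|\mid\+F_{i-1}]-|U_{i-1}|$. The case $\alpha<\beta$ is symmetric, using $|x|=2(x)_+-x$ with $x=\alpha P_i(c)-\beta Q_i(c)$, which yields $\beta-\alpha=|U_{i-1}|$.

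\textbf{Telescoping.} The tower property gives $\Ex[C_i]=\Ex|U_i|-\Ex|U_{i-1}|$. Summing over $i$,
\[
\Ex[A]=\Ex|U_n|-|U_0|=\Ex|U_n|,
\]
because $U_0=\tfrac12-\tfrac12=0$.

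\textbf{Evaluating $\Ex|U_n|$.} We have
\[
U_n=\frac{P(X)-Q(X)}{P(X)+Q(X)},
\]
and $X$ is distributed as $\frac{P+Q}{2}$. So \eqref{eqn:U} gives $\Ex|U_n|=\dTV(P,Q)$. Nonnegativity of $A$ is immediate from the $(\cdot)_+$ form of each $C_i$.

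The main point needing care is the first step: checking that the conditional law of $X_i$ given $\+F_{i-1}$ is the stated mixture (conditional independence of the coordinates given $H$), and handling zero-probability symbols. The rest is routine algebra.
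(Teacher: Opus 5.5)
Your proposal is correct and follows essentially the same route as the paper. The paper likewise checks that the update is Bayes' rule, computes $\Ex[|U_i|\mid\mathcal{F}_{i-1}]=\sum_c|\alpha_{i-1}P_i(c)-\beta_{i-1}Q_i(c)|$, identifies $C_i$ through \eqref{eq:prior-1} of \Cref{lem:prior} (which you re-derive inline via $|x|=2(x)_+-x$), then telescopes and evaluates $\Ex|U_n|$ using \eqref{eqn:U}.
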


The key property of $A$ is the following second moment bound.
\begin{lemma}  \label{lem:second-moment}
  Let $A$ be as in \Cref{alg:main-alg}.
  Then, $\Ex A^2\le 2\dTV(P,Q)^2$,
  which implies that the relative variance $\frac{\Var{}{A}}{(\Ex A)^2} \le 1$.
\end{lemma}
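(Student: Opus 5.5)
The plan is to expand $A^2$ using the nonnegativity and predictability of the $C_i$. Then I would bound the conditional expectation of the remaining sum $\sum_{j\ge i}C_j$ uniformly by $\dTV(P,Q)$, and finish with \Cref{lem:first-moment}. Throughout write $d\defeq\dTV(P,Q)$, and let $\alpha_{i-1},\beta_{i-1}$ be the posteriors before coordinate $i$. Each $C_i$ is $\mathcal F_{i-1}$-measurable and nonnegative: it is a sum of positive parts, and it equals $\Ex[|U_i|-|U_{i-1}|\mid\mathcal F_{i-1}]$.

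\textbf{Step 1 (expansion).} Put $T_i\defeq\sum_{j\ge i}C_j$. Since all $C_j\ge 0$,
\[
A^2=\sum_i C_i^2+2\sum_{i<j}C_iC_j\le 2\sum_{i=1}^n C_i T_i .
\]
Because $C_i$ is $\mathcal F_{i-1}$-measurable, $\Ex[C_iT_i]=\Ex\bigl[C_i\,\Ex[T_i\mid\mathcal F_{i-1}]\bigr]$. By the tower property and the definition of the $C_j$ as predictable increments, this telescopes to
\[
\Ex[T_i\mid\mathcal F_{i-1}]=\Ex[|U_n|\mid\mathcal F_{i-1}]-|U_{i-1}|.
\]

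\textbf{Step 2 (key bound).} The main step is to show
\[
\Ex[|U_n|\mid\mathcal F_{i-1}]-|U_{i-1}|\le d .
\]
Let $P'=\bigotimes_{j\ge i}P_j$ and $Q'=\bigotimes_{j\ge i}Q_j$ be the suffix products. Given the prefix, the suffix is drawn from $P'$ with probability $\alpha_{i-1}$ and from $Q'$ otherwise. Exactly as in \eqref{eqn:U}, we get $\Ex[|U_n|\mid\mathcal F_{i-1}]=\sum_{w}|\alpha_{i-1}P'(w)-\beta_{i-1}Q'(w)|$.

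Suppose $\alpha_{i-1}\ge\beta_{i-1}$; the other case is symmetric. Using $|x|=x+2(-x)_+$ and $\sum_w(\alpha P'-\beta Q')=\alpha-\beta=|U_{i-1}|$, the difference equals $2\sum_w(\beta_{i-1}Q'(w)-\alpha_{i-1}P'(w))_+$. Since $\alpha_{i-1}\ge\beta_{i-1}$, this is at most $2\beta_{i-1}\sum_w(Q'(w)-P'(w))_+=2\beta_{i-1}\dTV(P',Q')$. Now $\beta_{i-1}\le 1/2$. Also $\dTV(P',Q')\le d$, because $P',Q'$ are the marginals of $P,Q$ on the coordinates $\ge i$ and projection does not increase TV distance. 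So the bound is at most $d$.

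\textbf{Step 3 (conclusion).} Combining the steps with \Cref{lem:first-moment},
\[
\Ex A^2\le 2\sum_i\Ex[C_i]\,d=2d\,\Ex A=2d^2 .
\]
Hence $\Var{}{A}=\Ex A^2-(\Ex A)^2\le d^2=(\Ex A)^2$, which is the claimed relative variance bound. I expect Step 2 to be the only nontrivial point. The care needed is in correctly identifying the conditional law of the suffix as the $(\alpha_{i-1},\beta_{i-1})$-mixture of $P'$ and $Q'$, and in using $\min(\alpha,\beta)\le 1/2$ to absorb the factor $2$.
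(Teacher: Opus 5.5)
Your proof is correct and follows essentially the same route as the paper: you write $A^2 = 2\sum_i C_i T_i - \sum_i C_i^2 \le 2\sum_i C_i T_i$, pull out the $\mathcal F_{i-1}$-measurable $C_i$, and bound $\Ex[T_i\mid\mathcal F_{i-1}]$ by $\dTV(P_{i:n},Q_{i:n})\le\dTV(P,Q)$ via the suffix mixture. Your Step 2 simply inlines the paper's \Cref{lem:prior}, with your factor $2\beta_{i-1}$ playing the role of its $1-|U_{i-1}|$; note that the inequality in Step 1 only needs $\sum_i C_i^2\ge 0$, and nonnegativity of the $C_i$ is really used in Step 3.
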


Given \Cref{lem:first-moment} and \Cref{lem:second-moment}, we can prove \Cref{thm:main}.

\begin{proof}[Proof of \Cref{thm:main}]
  Let $s\defeq\left\lceil\frac{4}{\varepsilon^2}\right\rceil$ and $r$ be the smallest odd integer with $r\geq 8\log\frac{1}{\delta}$.
  Run $s$ independent estimators using \Cref{alg:main-alg}, and take the mean.
  Repeat this process $r$ times and output the median of all $r$ means.
  This is the standard median of means estimator.
  The approximation guarantee follows from \Cref{lem:first-moment}, \Cref{lem:second-moment}, Chebyshev's inequality, and the Chernoff bound.

  As for the running time, it is not hard to see that the time to generate one estimator $A$ is $O(nq)$,
  and the theorem holds.
\end{proof}

\section{The analysis}

The analysis relies on the following lemma, which holds for arbitrary distributions, not only products.

\begin{lemma}\label{lem:prior}
Let $R$ and $S$ be probability distributions on a finite set $\Omega$. Let $\alpha,\beta\geq 0$ satisfy
$\alpha+\beta=1$, and $u\defeq\alpha-\beta$. Define
\[
  V_{R,S}(u)\defeq\sum_{\omega\in\Omega}|\alpha R(\omega)-\beta S(\omega)|-|u|.
\]
Then
\begin{equation}\label{eq:prior-1}
  V_{R,S}(u)=\begin{cases}
            \displaystyle 2\sum_{\omega\in\Omega}\bigl(\beta S(\omega)-\alpha R(\omega)\bigr)_+,
            & \text{ if }\alpha\geq\beta,\\
            \displaystyle 2\sum_{\omega\in\Omega}\bigl(\alpha R(\omega)-\beta S(\omega)\bigr)_+,
            & \text{ if } \alpha<\beta,
          \end{cases}
\end{equation}
and
\begin{equation}\label{eq:prior-2}
    0\leq V_{R,S}(u)\leq (1-|u|)d_{\mathrm{TV}}(R,S).
\end{equation}

\end{lemma}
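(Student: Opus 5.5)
The plan is to prove \eqref{eq:prior-1} with the identity $|x| = 2(x)_+ - x$ (equivalently $|x| = 2(-x)_+ + x$), and then derive \eqref{eq:prior-2} from \eqref{eq:prior-1} by bounding each summand pointwise. By the symmetry swapping $(\alpha,R)$ with $(\beta,S)$, which sends $u$ to $-u$ and leaves $V$ unchanged, it suffices to treat the case $\alpha\geq\beta$, so that $|u|=\alpha-\beta$.

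For \eqref{eq:prior-1}, write $x_\omega\defeq \beta S(\omega)-\alpha R(\omega)$. Then $|\alpha R(\omega)-\beta S(\omega)| = |x_\omega| = 2(x_\omega)_+ - x_\omega$. Summing over $\omega$ and using $\sum_\omega R(\omega)=\sum_\omega S(\omega)=1$ gives $\sum_\omega x_\omega = \beta-\alpha = -|u|$. Hence
\[
\sum_{\omega}|\alpha R(\omega)-\beta S(\omega)| = 2\sum_\omega (x_\omega)_+ + |u|,
\]
which is exactly the first case of \eqref{eq:prior-1}. Nonnegativity of $V_{R,S}(u)$ is then immediate.

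For the upper bound in \eqref{eq:prior-2}, again take $\alpha\ge\beta$, so that $1-|u| = 2\beta$. The target becomes
\[
\sum_\omega(\beta S(\omega)-\alpha R(\omega))_+ \le \beta\, d_{\mathrm{TV}}(R,S) = \beta\sum_\omega (S(\omega)-R(\omega))_+ .
\]
This is proved termwise. Since $\alpha\ge\beta$ and $R\ge 0$, we have $\beta S(\omega)-\alpha R(\omega)\le \beta S(\omega)-\beta R(\omega) = \beta(S(\omega)-R(\omega))$. Because $(\cdot)_+$ is monotone and $\beta\ge0$, it follows that $(\beta S(\omega)-\alpha R(\omega))_+\le \beta (S(\omega)-R(\omega))_+$. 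Summing over $\omega$ and using $d_{\mathrm{TV}}(R,S)=\sum_\omega (S(\omega)-R(\omega))_+$ finishes this case. The case $\alpha<\beta$ is symmetric, with $1-|u|=2\alpha$ and the bound $\alpha R-\beta S\le \alpha(R-S)$.

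I do not expect a serious obstacle here. The only point needing care is to bound by the smaller of the two weights, $\min(\alpha,\beta)=(1-|u|)/2$, so that the factor $(1-|u|)$ appears with the correct constant after the factor $2$ in \eqref{eq:prior-1}.
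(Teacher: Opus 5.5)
Your proposal is correct and follows essentially the same route as the paper. Both reduce to the case $\alpha\geq\beta$ by symmetry, derive \eqref{eq:prior-1} from the identity $|x|-x=2(-x)_+$ together with $\sum_\omega(\alpha R(\omega)-\beta S(\omega))=u$, and obtain the upper bound via the termwise estimate $\beta S(\omega)-\alpha R(\omega)\le\beta\bigl(S(\omega)-R(\omega)\bigr)$, with $2\beta=1-|u|$.
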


\begin{proof}
We first assume $u\geq 0$, so $\alpha\geq\beta$. Note that for any numbers $a_1,\ldots,a_m$,
\begin{equation*}
  \sum_{i=1}^m|a_i|-\sum_{i=1}^ma_i=\sum_{i=1}^m((a_i)_+ + (-a_i)_+)-\sum_{i=1}^m((a_i)_+ - (-a_i)_+)=2\sum_{i=1}^m(-a_i)_+.
\end{equation*}
Therefore, it holds that
\begin{align*}
  V_{R,S}(u)&=\sum_{\omega\in\Omega}|\alpha R(\omega)-\beta S(\omega)|-(\alpha-\beta)=\sum_{\omega\in\Omega}|\alpha R(\omega)-\beta S(\omega)|-\sum_{\omega\in\Omega}(\alpha R(\omega)-\beta S(\omega))\\
  &=2\sum_{\omega\in\Omega}\bigl(\beta S(\omega)-\alpha R(\omega)\bigr)_+,
\end{align*}
which implies \eqref{eq:prior-1} and the lower bound of \eqref{eq:prior-2}.
Moreover,
\[
\beta S(\omega)-\alpha R(\omega)
=\beta\bigl(S(\omega)-R(\omega)\bigr)-(\alpha-\beta)R(\omega)
\leq\beta\bigl(S(\omega)-R(\omega)\bigr).
\]
It follows that
\begin{align*}
V_{R,S}(u)
&\leq 2\beta\sum_{\omega\in\Omega}\bigl(S(\omega)-R(\omega)\bigr)_+\\
&=2\beta d_{\mathrm{TV}}(R,S)=(1-u)d_{\mathrm{TV}}(R,S).
\end{align*}
The case $u\leq 0$ is symmetric after exchanging $R$ and $S$.
The lemma follows.
\end{proof}

We now analyse the first moment and show \Cref{lem:first-moment}.
\begin{proof}[Proof of \Cref{lem:first-moment}]
  Recall the definitions of $\alpha_i$, $\beta_i$, and $U_i$.
  In \Cref{alg:main-alg}, at coordinate $i$, 
  the probability of drawing $c\in[q]$ is $\alpha_{i-1} P_i(c)+\beta_{i-1}Q_i(c)$.
  Thus, by Bayes's rule,
  \begin{align*}
    \alpha_i&=\frac{\alpha_{i-1} P_i(c)}{\alpha_{i-1} P_i(c)+\beta_{i-1}Q_i(c)},
    &\beta_i&=\frac{\beta_{i-1} Q_i(c)}{\alpha_{i-1} P_i(c)+\beta_{i-1}Q_i(c)}.
  \end{align*}
  Note that this is consistent with the update rule in Line \ref{line:update}.
  Thus, inductively, the values $\alpha$ and $\beta$ at the end of iteration $i$ are exactly $\alpha_i$ and $\beta_i$.

  Since $\Ex\left[\abs{U_i}\mid\mathcal{F}_{i-1}\right] =\sum_{c=1}^{q}\abs{\alpha_{i-1} P_i(c)-\beta_{i-1} Q_i(c)}$, identifying $R=P_i$ and $S=Q_i$ in \eqref{eq:prior-1} yields $C_i=\Ex\left[\abs{U_i}\mid\mathcal{F}_{i-1}\right]-\abs{U_{i-1}}$,
  where $C_i$ is the value in \Cref{alg:main-alg}.
  Taking expectations and telescoping gives $\Ex A =\Ex\abs{U_n}$.
  For every $\omega$ so that $P(\omega)+Q(\omega)>0$,
  \[
    \abs{U_n(\omega)}=\frac{\abs{P(\omega)-Q(\omega)}}{P(\omega)+Q(\omega)}.
  \]
  The probability of drawing $\omega$ is $\frac{P(\omega)+Q(\omega)}{2}$.
  Therefore,
  \begin{align*}
    \Ex A &=\sum_{\omega\in\Omega}\frac{P(\omega)+Q(\omega)}{2} \frac{\abs{P(\omega)-Q(\omega)}}{P(\omega)+Q(\omega)}=\dTV(P,Q). \qedhere
  \end{align*}
\end{proof}

Finally, we complete the proof by analysing the second moment.

\begin{proof}[Proof of \Cref{lem:second-moment}]
Let $P_{i:n}=\bigotimes_{j=i}^{n}P_j$ and $Q_{i:n}=\bigotimes_{j=i}^{n}Q_j$.
Since $P$ and $Q$ are product distributions,
conditioned on $\mathcal{F}_{i-1}$, 
the remaining experiment consists of distinguishing $P_{i:n}$ versus $Q_{i:n}$ with current prior bias $U_{i-1}$. 
By telescoping conditional expectations and applying \eqref{eq:prior-2},
\begin{equation}
\Ex\left[\left.\sum_{j=i}^{n}C_j\,\right|\,\mathcal{F}_{i-1}\right]
=V_{P_{i:n},Q_{i:n}}(U_{i-1})
\leq \dTV(P_{i:n},Q_{i:n})
\leq \dTV(P,Q),
\label{eq:tail-compensator}
\end{equation}
where the last inequality follows because projecting on the $i,\dots,n$ coordinates cannot increase the TV distance. 

Next we rewrite
\[
A^2
=2\sum_{i=1}^{n}C_i\sum_{j=i}^{n}C_j
 -\sum_{i=1}^{n}C_i^2.
\]
The key point is that $C_i$ is completely determined by $\+F_{i-1}$.
Thus,
\begin{align}
  \Ex\left[\left.C_i\sum_{j=i}^{n}C_j\,\right|\,\mathcal{F}_{i-1}\right] = \Ex\left[\left.C_i\,\right|\,\mathcal{F}_{i-1}\right] \Ex\left[\left.\sum_{j=i}^{n}C_j\,\right|\,\mathcal{F}_{i-1}\right].
  \label{eqn:conditional}
\end{align}
Therefore,
\begin{align*}
  \Ex A^2 & = 2\sum_{i=1}^{n}\Ex\left[C_i\sum_{j=i}^{n}C_j\right]-\sum_{i=1}^{n}\Ex C_i^2\\
  & = 2\sum_{i=1}^{n}\Ex\left[\Ex\left[\left.C_i\sum_{j=i}^{n}C_j\,\right|\,\mathcal{F}_{i-1}\right]\right]-\sum_{i=1}^{n}\Ex C_i^2\\
  & = 2\sum_{i=1}^{n}\Ex\left[\Ex\left[\left.C_i\,\right|\,\mathcal{F}_{i-1}\right] \Ex\left[\left.\sum_{j=i}^{n}C_j\,\right|\,\mathcal{F}_{i-1}\right]\right]-\sum_{i=1}^{n}\Ex C_i^2 \tag{by \eqref{eqn:conditional}}\\
  & \le 2\dTV(P,Q)\sum_{i=1}^{n}\Ex\left[\Ex\left[\left.C_i\,\right|\,\mathcal{F}_{i-1}\right]\right]-\sum_{i=1}^{n}\Ex C_i^2 \tag{by \eqref{eq:tail-compensator}}\\
  & \le 2\dTV(P,Q)\sum_{i=1}^{n} \Ex C_i = 2\dTV(P,Q) \Ex A \\
  & = 2\dTV(P,Q)^2. \tag{by \Cref{lem:first-moment}}
\end{align*}
The lemma follows.
\end{proof}

\section{Further improved run-time and limits}
\label{sec:limit}

We first present an alternative way of implementing \Cref{thm:main}, which is faster if $\eps^{-2}\log q+\frac{\log q}{q}=o(1)$.
The key point is the computation of \eqref{eqn:C_i} mainly cares about what $c\in [q]$ makes $\frac{P_i(c)}{Q_i(c)}\ge \frac{\beta}{\alpha}$.
We may thus first sort for each $i$, the ratio $\frac{P_i(c)}{Q_i(c)}$.
Suppose the resulting ordering for coordinate $i$ is $x_{i,1},\dots,x_{i,q}$.
Then compute the suffix sum $\sum_{j\ge t} P_i(x_{i,j})$ for every $i\in[n]$ and $t\in[q]$, and similarly for the prefix sum, and for $Q_i$'s.
Then, at each iteration $i$, we use a binary search to locate the threshold, and then use the corresponding prefix or suffix sums to compute $C_i$.
The preprocessing per coordinate takes time $O(q\log q)$, and the binary search takes time $O(\log q)$.
The overall runtime is $O\left( nq\log q+\frac{n\log q}{\eps^2}\log\frac{1}{\delta} \right)$.

On the other hand, it is straightforward that any efficient approximation needs to read a constant fraction of the input at least.
Here we include a proof for completeness.
Our model is that $P$ and $Q$ are given by the $2qn$ marginals, and that the algorithm queries these marginals.

\begin{theorem}
\label{thm:lb}
Let $n\geq 1$, $q\geq 2$, $0<\varepsilon<1$, and
$0\leq\delta<1/2$.
In the marginal-query model, every randomised algorithm that outputs
a multiplicative $(1\pm\varepsilon)$-approximation to
$\dTV(P,Q)$ with probability at least $1-\delta$ for any two product distributions $P$ and $Q$ on $[q]^n$ must make
at least
\[
  (1-2\delta)n\left\lfloor\frac q2\right\rfloor
\]
queries in the worst case.
In particular, its running time is $\Omega(nq)$.
\end{theorem}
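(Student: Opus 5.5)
We prove the bound with Yao's minimax principle: we exhibit a distribution over inputs on which every deterministic algorithm making fewer than the stated number of queries errs with probability greater than $\delta$. The construction is a ``needle in a haystack''. We take $P=Q$ to be uniform on $[q]^n$, so that $\dTV(P,Q)=0$. We then hide a perturbation in one of $N\defeq n\lfloor q/2\rfloor$ disjoint locations. Pair the symbols of $[q]$ as $(1,2),(3,4),\dots$, which gives $\lfloor q/2\rfloor$ pairs per coordinate and $N$ pairs $(i,\{2k-1,2k\})$ in total. For each pair $\pi=(i,\{a,b\})$ let the instance $(P,Q^{\pi})$ have $Q^{\pi}$ equal to the uniform product except at coordinate $i$, where $Q^\pi_i(a)=\frac{1}{q}+\eta$ and $Q^\pi_i(b)=\frac1q-\eta$ for some small $\eta>0$. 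Then $\dTV(P,Q^\pi)=\dTV(P_i,Q_i^\pi)=\eta>0$, using that the TV distance between products differing in one coordinate equals the TV distance of that coordinate. A multiplicative $(1\pm\eps)$-approximation must output $0$ on the instance $(P,P)$. On any $(P,Q^\pi)$ it must output a strictly positive number, since $\eps<1$. So the algorithm must distinguish the null instance from every perturbed one.

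Let $\mu$ be the input distribution that picks $(P,P)$ with probability $1/2$, and otherwise picks a uniformly random $\pi$ among the $N$ pairs. Fix a deterministic algorithm making at most $t$ queries. Run it on the null instance, and let $S$ be the set of pairs $\pi$ that have at least one of their two $Q$-entries queried. Since the pairs are disjoint, $|S|\le t$. On any instance $\pi\notin S$, every queried value agrees with the null instance, because the $P$ entries and all other $Q$ entries are unchanged. Hence the algorithm follows the same query path and produces the same output. If that output is nonzero, it errs on the null instance. If it is zero, it errs on every $\pi\notin S$. Its error probability under $\mu$ is therefore at least $\min\bigl\{\tfrac12,\ \tfrac12\cdot\tfrac{N-t}{N}\bigr\}=\tfrac12\cdot\tfrac{N-t}{N}$.

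By Yao's principle, a randomised algorithm with worst-case error at most $\delta$ yields, for the $\mu$-average, a deterministic algorithm with error at most $\delta$. This gives $\tfrac{N-t}{2N}\le\delta$, that is, $t\ge(1-2\delta)N=(1-2\delta)n\lfloor q/2\rfloor$. Since $q\ge2$ and $\delta<1/2$ are fixed, this is $\Omega(nq)$.

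The step requiring most care is the indistinguishability argument, because the algorithm is adaptive. The key point is that the null run defines one fixed query path. That path is reproduced exactly on every instance whose perturbation avoids all the queried entries. Adaptivity therefore does not help, and each query can ``cover'' at most one pair. Pairing the symbols is what keeps each perturbed $Q^\pi_i$ a valid distribution while changing only two entries, and it is where the factor $\lfloor q/2\rfloor$ comes from.
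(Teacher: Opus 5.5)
Your proposal is correct and follows the paper's proof essentially step for step. It uses the same $N=n\lfloor q/2\rfloor$ paired-symbol perturbations, the same half-null/half-uniform-needle input mixture, and the same fixed-randomness indistinguishability argument bounding success by $\frac12+\frac{T}{2N}$, followed by averaging over internal randomness, which you phrase as Yao's principle. The only difference is that the paper fixes $\eta=\frac{1}{2q}$; your ``small $\eta$'' should be pinned down as $0<\eta\le 1/q$ so that $Q^\pi_i$ is a valid distribution.
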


\begin{proof}
Let $m\defeq\left\lfloor\frac q2\right\rfloor$, $N\defeq nm$, and $\eta\defeq\frac{1}{2q}$.
Let $U$ be the uniform distribution on $[q]$, and let $ P=Q^{(0)}\defeq U^{\otimes n}$.
For each $(i,j)\in[n]\times[m]$, define $Q^{(i,j)}$ by taking all
marginals other than the $i$-th one to be uniform and setting
\[
  Q^{(i,j)}_i(2j-1)=\frac1q+\eta,
  \qquad
  Q^{(i,j)}_i(2j)=\frac1q-\eta,
\]
with all other entries of $Q^{(i,j)}_i$ equal to $1/q$.
For any $(i,j)\in[n]\times[m]$, only one marginal of $Q^{(i,j)}$ differs from $P$, and hence
\[
  \dTV(P,Q^{(i,j)})
  =\dTV(U,Q^{(i,j)}_i)
  =\eta.
\]
In contrast, $\dTV(P,Q^{(0)})=0$.

Fix the internal randomness of the algorithm, and consider its
execution on $Q^{(0)}$. Suppose this execution
makes at most $T$ queries. Let $S\subseteq[n]\times[m]$ contain
$(i,j)$ whenever the algorithm queries at least one of the cells
$Q_i(2j-1)$ and $Q_i(2j)$. Then $|S|\leq T$.
For every $(i,j)\notin S$, the execution log on $Q^{(i,j)}$ is identical to the log on $Q^{(0)}$.
It means that the algorithm returns the same value on these two inputs.

Consider a random input that is $Q^{(0)}$ with probability $1/2$
and is a uniformly random $Q^{(i,j)}$ with probability $1/2$.
If the algorithm outputs zero on $Q^{(0)}$, then it fails
on every $Q^{(i,j)}$ with $(i,j)\notin S$. 
Its average success probability is therefore at most
\[
  \frac12+\frac{|S|}{2N}
  \leq \frac12+\frac{T}{2N}.
\]
If it outputs a nonzero value on $Q^{(0)}$, then it fails on $Q^{(0)}$, so its average success probability is at most $1/2$.
Thus, in every case, the average success probability is at most $\frac{1}{2}+\frac{T}{2N}$.

The same bound holds after averaging over the internal randomness.
Since the algorithm succeeds with probability at least $1-\delta$ for any pair of $P$ and $Q$, we have that $ 1-\delta\leq\frac12+\frac{T}{2N}$.
Thus,
\begin{align*}
  T& \geq(1-2\delta)N
  =(1-2\delta)n\left\lfloor\frac q2\right\rfloor. \qedhere
\end{align*}
\end{proof}

\section*{Acknowledgement}

We thank Weiming Feng for some helpful comments.
Part of this project has received funding from the European Research Council (ERC) under the European Union's Horizon 2020 research and innovation programme (grant agreement No.~947778).

\bibliographystyle{alpha}
\bibliography{refs}

\newcommand{\etalchar}[1]{$^{#1}$}
\begin{thebibliography}{BGM{\etalchar{+}}25b}

\bibitem[BFS25]{BFS25}
Arnab Bhattacharyya, Weiming Feng, and Piyush Srivastava.
\newblock Approximating the total variation distance between {G}aussians.
\newblock In {\em AISTATS}, pages 1846--1854, 2025.

\bibitem[BGM{\etalchar{+}}24]{BGMMPV24}
Arnab Bhattacharyya, Sutanu Gayen, Kuldeep~S. Meel, Dimitrios Myrisiotis,
  A.~Pavan, and N.~V. Vinodchandran.
\newblock Total variation distance meets probabilistic inference.
\newblock In {\em ICML}, pages 3776--3794, 2024.

\bibitem[BGM{\etalchar{+}}25a]{BGMMPV25b}
Arnab Bhattacharyya, Sutanu Gayen, Kuldeep~S. Meel, Dimitrios Myrisiotis,
  A.~Pavan, and N.~V. Vinodchandran.
\newblock Total variation distance for product distributions is \#{P}-complete.
\newblock {\em Inf. Process. Lett.}, 189:106560, 2025.

\bibitem[BGM{\etalchar{+}}25b]{BGMMPV25a}
Arnab Bhattacharyya, Sutanu Gayen, Kuldeep~S. Meel, Dimitrios Myrisiotis, Aduri
  Pavan, and N.~V. Vinodchandran.
\newblock Computational explorations of total variation distance.
\newblock In {\em ICLR}, 2025.

\bibitem[FF26]{FF26}
Weiming Feng and Yucheng Fu.
\newblock On approximating the $f$-divergence between two {I}sing models.
\newblock In {\em {ITCS}}, volume 362 of {\em LIPIcs}, pages 59:1--59:23.
  Schloss Dagstuhl - Leibniz-Zentrum f{\"{u}}r Informatik, 2026.

\bibitem[FFYZ26]{FFYZ26}
Weiming Feng, Yucheng Fu, Minji Yang, and Anqi Zhang.
\newblock On computing total variation distance between mixtures of product
  distributions.
\newblock {\em arXiv}, abs/2605.03839, 2026.

\bibitem[FGJW23]{FGJW23}
Weiming Feng, Heng Guo, Mark Jerrum, and Jiaheng Wang.
\newblock A simple polynomial-time approximation algorithm for the total
  variation distance between two product distributions.
\newblock {\em TheoretiCS}, 2:Paper No. 7, 2023.

\bibitem[FLL24]{FLL24}
Weiming Feng, Liqiang Liu, and Tianren Liu.
\newblock On deterministically approximating total variation distance.
\newblock In {\em SODA}, pages 1766--1791, 2024.

\bibitem[FLY25]{FLY25}
Weiming Feng, Hongyang Liu, and Minji Yang.
\newblock Approximating the total variation distance between spin systems.
\newblock In {\em COLT}, pages 1974--2025, 2025.

\bibitem[Gla93]{Gla93}
Paul Glasserman.
\newblock Filtered {M}onte {C}arlo.
\newblock {\em Math. Oper. Res.}, 18(3):610--634, 1993.

\bibitem[Kon25]{Kon25}
Aryeh Kontorovich.
\newblock On the tensorization of the variational distance.
\newblock {\em Electron. Commun. Probab.}, 30:Paper No. 32, 10, 2025.

\bibitem[PTXZ26]{PTXZ26}
Eric Price, Kevin Tian, Zhiyang Xun, and Yusong Zhu.
\newblock Total variation distance estimation in autoregressive models.
\newblock {\em arXiv}, abs/2607.19510, 2026.

\end{thebibliography}

\end{document}